\documentclass[a4paper,12pt]{article}

\usepackage[truedimen,margin=25truemm]{geometry}
\geometry{a4paper}
\usepackage[whole]{bxcjkjatype}

\usepackage{comment}
\usepackage{url}
\usepackage{hyperref}

\bibliographystyle{plainurl}

\usepackage{amsmath,amssymb}
\usepackage{exscale}
\usepackage{bm}
\usepackage[hiresbb]{graphicx}
\usepackage{tikz}
\usepackage{amsthm}
\usepackage[hang,small,bf]{caption}
\usepackage[subrefformat=parens]{subcaption}
\captionsetup{compatibility=false}
\usetikzlibrary {arrows.meta}
\usetikzlibrary{calc}
\usepackage{here}
\usepackage{ascmac}
\usetikzlibrary{intersections,calc,arrows}
\newcommand{\ctext}[1]{\raise0.2ex\hbox{\textcircled{\scriptsize{#1}}}}

    \theoremstyle{definition}
    \newtheorem{dfn}{Definition}[section]
    
    \newtheorem{claim}[dfn]{Claim}
    \newtheorem{lem}[dfn]{Lemma}
    \newtheorem{thm}[dfn]{Theorem}

\title{On Approximating the Weighted Region Problem in Square Tessellations\thanks{Supported by JSPS KAKENHI Grant Numbers 23K21646, JP22H05001, and JP21H03397, Japan and JST ERATO Grant Number JPMJER2301, Japan.}}

\author{Naonori Kakimura\thanks{Keio University, Japan. kakimura@math.keio.ac.jp} \and
Rio Katsu\thanks{NS Solutions Corporation, Japan.}
}
\date{}

\begin{document}

\maketitle

\begin{abstract}
The weighted region problem is the problem of  finding the weighted shortest path on a plane consisting of polygonal regions with different weights. 
For the case when the plane is tessellated by squares, we can solve the problem approximately by finding the shortest path on a grid graph defined by placing a vertex at the center of each grid. 
In this note, we show that the obtained path admits $(\sqrt{2}+1)$-approximation.
This improves the previous result of $2\sqrt{2}$.
\end{abstract}

\section{Introduction}

In this note, we study the problem of finding a shortest path in a plane with weighted regions, called the \textit{weighted region problem}.
In the problem, we are given a set of polygonal regions in a plane, where each polygon has a weight.
The goal is to find a path that minimizes the total weight of the path.
The problem was first introduced by Mitchell--Papadimitriou~\cite{MitchellP91}.
They proposed a Dijkstra-like algorithm that finds a $(1+\varepsilon)$-approximate shortest path, running in $O(n^{8}\log{(\frac{n}{\varepsilon})})$ time, where $n$ is the total number of vertices in the polygons.
The running time was later improved to 
$O(\frac{n}{\sqrt{\varepsilon}}\log{\frac{n}{\varepsilon}}\log{\frac{1}{\varepsilon}})$ by Aleksandrov et al.~\cite{AleksandrovMS05}~(see also~\cite{AleksandrovLMS98}).
Recently, it was shown that the weighted region problem is unsolvable in the algebraic computation model over the rational numbers~\cite{CarufelGMOS14,deBerg2024}.

Motivated by the simple, fast computation for the weighted region problem, 
Jaklin et al.~\cite{JaklinTG14} proposed an algorithm that, approximating the weighted region by a simple graph, finds a shortest path on the graph.
Here we focus on the weighted region problem on a square tessellation.
We define a graph consisting of the center points of each square with $8$ neighbors.
See Figure~\ref{fig:example} for an example.
Jaklin et al.~\cite{JaklinTG14} showed that the approximation factor of this approach is at most $4+\sqrt{4-2\sqrt{2}}\approx 5.08$, which was later improved to $2\sqrt{2}\approx 2.82$ by Jaklin~\cite{JaklinPHD}.

    \begin{figure}[t]
\centering
    \begin{tikzpicture}[scale=0.6]
    \draw[thin, ->] (-3.5, 0.5)--(-2.5, 0.5);
    \draw (-2.5,0.5) node[right]{\small$x$};
    \draw[thin, ->] (-3.5, 0.5)--(-3.5, 1.5);
    \draw (-3.5,1.5) node[left]{\small$y$};
    \filldraw[black] (1,0)--(3,0)--(3,2)--(1,2)--cycle;
    \filldraw[gray!30] (1,2)--(4,2)--(4,1)--(5,1)--(5,2)--(6,2)--(6,1)--(7,1)--(7,4)--(6,4)--(6,3)--(3,3)--(3,5)--(1,5)--cycle;
    \filldraw[gray!30] (3,5)--(7,5)--(7,6)--(3,6)--cycle;
    \filldraw[black] (7,0)--(8,0)--(8,1)--(9,1)--(9,5)--(7,5)--cycle;
    \draw[step=1cm,very thin](0,0) grid (10,7);
    \coordinate (s) at (0.5,0.5);
    \coordinate (g) at (9.5,1.5);
    \draw (s) node[below]{$s$};
    \draw (g) node[below]{$g$};
    \draw[red,very thick] (s)--(1,2.5)--(3,3.1)--(7,5)--(9,5)--(g);
    \draw [very thick] (s)--(0.5,1.5)--(1.5,2.5)--(2.5,2.5)--(4.5,4.5)--(6.5,4.5)--(7.5,5.5)--(8.5,5.5)--(9.5,4.5)--(g);
    \fill[black] (s) circle (0.06);
    \fill[black] (g) circle (0.06);
    \draw (4.4,4.4) [above]node{$P^\ast$};
    \draw[red] (4,3.6) [below]node{$\pi^{*}$};
    \end{tikzpicture}
\caption{The white, gray, and black squares have weights $1$, $2$, and $100$, respectively. The red path $\pi^\ast$ depicts a shortest path from $s$ to $g$, and the black path $P^\ast$ is a shortest path on the grid graph.}
\label{fig:example}
\end{figure}
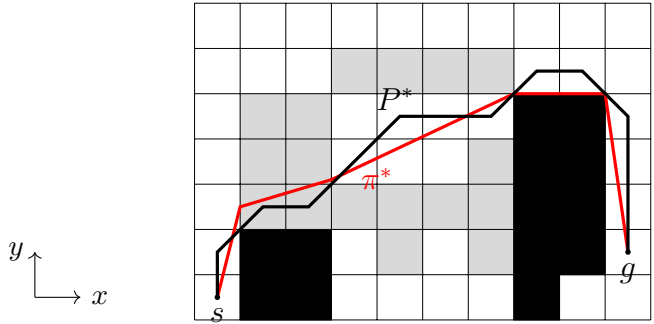  

The main contribution of this note is to improve the ratio to $\sqrt{2}+1\approx 2.41$.
The result is achieved by refining the analysis by Jaklin~\cite{JaklinPHD}.
We remark that there exists an instance such that the shortest path length on the grid graph is at least $\sqrt{2}$ times as long as the original shortest path length.
For the $\{1, +\infty\}$-weight case, the lower bound is known to be $3\sqrt{2}-3$~\cite{BaileyTUKN15}.

Bose et al.~\cite{BoseEuroCG,BoseArxiv} introduced a different kind of a graph where vertices are placed at the corners of each square.
They showed that the shortest path length on the graph admits $\sqrt{4-2\sqrt{2}}$-approximation, which is asymptotically optimal.
See also Bose et al.~\cite{BoseEOS23} for triangular tessellations and Bailey et al.~\cite{BaileyNTK21,BaileyTUKN15} for the $\{1, +\infty\}$-weight case.

\section{Problem Setting}

In this section, we formally describe the problem setting by Jaklin et al.~\cite{JaklinTG14}.
For a positive integer $n$, we denote $[n]=\{1,2,\dots, n\}$.

In a square tessellation, a 2-dimensional plane is divided into $m\times n$ squares with the same size such that each square region has a non-negative weight.
For simplicity, we assume that one side of each square has length $1$.
The whole region consisting of the $m\times n$ squares is denoted by $\mathcal{S}$.
We give an index $(x, y)$ to each square for $x\in [m]$ and $y\in [n]$, where the lower-left corner is $(1,1)$ and the upper \-right corner is $(m,n)$.
The $(x,y)$-th square is denoted by $S_{xy}$.
Let $\alpha_{xy}$ be the weight of the square $S_{xy}$.

In the problem, we are also given a start point $s$ and a goal point $g$, where we assume that $s$ and $g$ are the center points of two distinct squares.
We want to find a path from $s$ to $g$ that minimizes its length.
Here the length of a path is defined as follows.
For a path $\pi$, we denote by $\pi_{xy}$ the segment of $\pi$ included in the $(x, y)$-th square.
Then the length of $\pi$, denoted by $C(\pi)$, is defined as
\[
C(\pi)=\sum_{(x,y)\in [m]\times [n]} \alpha_{xy}\|\pi_{xy}\|,
\]
where $\|\cdot \|$ denotes the Euclidean norm.
Thus the length is the weighted sum of all the path segments.

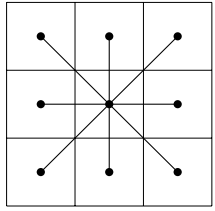
\begin{figure}[t]
\centering
\begin{tikzpicture}[scale=0.9]
    \draw[step=1cm,very thin](0,0) grid (3,3);
    \coordinate (a1) at (0.5,0.5);
    \coordinate (a2) at (1.5,0.5);
    \coordinate (a3) at (2.5,0.5);
    \coordinate (a4) at (0.5,1.5);
    \coordinate (a5) at (1.5,1.5);
    \coordinate (a6) at (2.5,1.5);
    \coordinate (a7) at (0.5,2.5);
    \coordinate (a8) at (1.5,2.5);
    \coordinate (a9) at (2.5,2.5);
    \draw (a5)--(a1);
    \draw (a5)--(a2);
    \draw (a5)--(a3);
    \draw (a5)--(a4);
    \draw (a5)--(a6);
    \draw (a5)--(a7);
    \draw (a5)--(a8);
    \draw (a5)--(a9);
    \fill[black] (a1) circle (0.06);
    \fill[black] (a2) circle (0.06);
    \fill[black] (a3) circle (0.06);
    \fill[black] (a4) circle (0.06);
    \fill[black] (a5) circle (0.06);
    \fill[black] (a6) circle (0.06);
    \fill[black] (a7) circle (0.06);
    \fill[black] (a8) circle (0.06);
    \fill[black] (a9) circle (0.06);
\end{tikzpicture}
\caption{Neighbor vertices of a square.}
\label{fig:8neighborsgridgraph}
\end{figure}

Jaklin et al.~\cite{JaklinTG14} introduced a \textit{grid graph} on a square tessellation, and proposed a method to solve the weighted region problem approximately by finding a shortest path on the grid graph.
Formally, the grid graph is a graph $G=(V,E)$ with vertex set $V$ and edge set $E$ defined as follows.
\begin{itemize}
\item $V=\{v_{xy}\ |\ x\in [m], y\in [n]\}$, and 
\item 
$E= \left\{(v_{xy}, v_{x'y'})
\mid |x-x'|\leq 1 \text{\ and\ } |y-y'|\leq 1\ (x,x'\in [m], y,y'\in [n])\right\}$.
\end{itemize}
Thus each vertex of $G$ has $8$ neighbors~(See Figure~\ref{fig:8neighborsgridgraph}).
We say that an edge $e=(v_{xy}, v_{x'y'})$ is \textit{horizontal} if $y=y'$, \textit{vertical} if $x=x'$, and \textit{diagonal} if $|x-x'|=|y-y'|=1$.

In addition, an edge weight $w: E\to \mathbb{R}_+$ on the grid graph is defined as, for an edge $e=(v_{xy}, v_{x'y'})$, 
\[
w (e) = 
\begin{cases}
\frac{1}{2}(\alpha_{xy}+\alpha_{x'y'}) & \text{if $e$ is horizontal or vertical,}\\
\frac{\sqrt{2}}{2}(\alpha_{xy}+\alpha_{x'y'}) & \text{otherwise}.
\end{cases}
\]
For a path $P$ from $s$ to $g$ on the grid graph $G$, its length is defined to be the sum of the weights of the edges in $P$, denoted by $C(P)$, that is,
$C(P) = \sum_{e\in E(P)}w(e)$.
Also, we denote by $P_{xy}$ the subpath of $P$ included in the $(x,y)$-th square.

\section{Main Analysis}

Let $\pi^\ast$ be a shortest path from $s$ to $g$ on a square tessellation, and $P^\ast$ be a shortest path from $s$ to $g$ on the grid graph $G$.
In this section, we show that the ratio between $C(P^\ast)$ and $C(\pi^{*})$ is at most $\sqrt{2}+1$.

\begin{thm}\label{main3}
For any instance of the weighted region problem on a square tessellation, there exists a  path $P$ from a start point $s$ to a goal point $g$ on the grid graph such that 
\[
C(P)\leq\left(\sqrt{2}+1\right)C(\pi^{*}).
\]
\end{thm}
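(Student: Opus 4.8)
The plan is to construct an explicit grid path $P$ that tracks the continuous optimum $\pi^*$, and to charge its cost to $C(\pi^*)$ square by square, after first deleting from $P$ those squares that $\pi^*$ barely touches.

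First I would reduce the discrete cost to a continuous one. For a horizontal or vertical edge the unit segment between the two centres lies half in each of the two squares, so its continuously measured weighted length is exactly $\tfrac12(\alpha_{xy}+\alpha_{x'y'})=w(e)$; for a diagonal edge the segment of length $\sqrt2$ runs from one centre to the shared corner and on to the other centre, spending length $\tfrac{\sqrt2}{2}$ in each square and meeting the other two squares only at the corner, so again its weighted length equals $w(e)$. Hence $C(P)$ equals the continuous weighted length of the polyline through the visited centres; moreover, if we attribute to each square the weighted length of the (at most two) half-edges of $P$ inside it, the coefficient of $\alpha$ at any square is $\lambda_{\mathrm{in}}+\lambda_{\mathrm{out}}$ with each $\lambda\in\{\tfrac12,\tfrac{\sqrt2}{2}\}$, hence at most $\sqrt2$. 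The problem thus becomes: build a centre-polyline, with consecutive centres $8$-adjacent, whose weighted length is at most $(\sqrt2+1)\,C(\pi^*)$.

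Second, the construction and the key identity. Enumerate the squares $S^{(0)},\dots,S^{(k)}$ met by $\pi^*$ in order (consecutive ones are always $8$-adjacent), with weights $\beta_i$ and internal lengths $\ell_i=\|\pi^*_{S^{(i)}}\|$, so $C(\pi^*)=\sum_i\beta_i\ell_i$. I would keep a square on the path exactly when $\ell_i\ge 2-\sqrt2$ and delete (``shortcut'') those with $\ell_i<2-\sqrt2$: such a square is necessarily cut only near one corner, its two neighbours in the sequence are diagonally adjacent, so deleting it and joining its neighbours keeps $P$ a legal grid walk, and the shortcut edge, passing through that very corner, never charges the deleted square's weight. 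For every \emph{kept} square the attributed cost is at most $\sqrt2\,\beta_i$ by the coefficient bound above, while $\ell_i\ge 2-\sqrt2$; the numbers are tuned so that
\[
\sqrt2\,\beta_i \;=\;(\sqrt2+1)(2-\sqrt2)\,\beta_i\;\le\;(\sqrt2+1)\,\beta_i\ell_i ,
\]
using $(\sqrt2+1)(2-\sqrt2)=\sqrt2$. Summing over kept squares, and bounding the deleted squares' nonnegative $\pi^*$-contribution below by $0$, then yields $C(P)\le(\sqrt2+1)\sum_i\beta_i\ell_i=(\sqrt2+1)\,C(\pi^*)$. Replacing Jaklin's looser per-square estimate by the exact threshold $2-\sqrt2$ is precisely where the bound tightens from $2\sqrt2$ to $\sqrt2+1$.

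I expect the main obstacle to be making the deletion step rigorous, i.e.\ proving that after removing \emph{every} square with $\ell_i<2-\sqrt2$ the surviving centres still form a legal grid walk. A single low-length square is an isolated corner clip whose two neighbours are diagonally adjacent, hence easy; but when $\pi^*$ hugs a corner of a heavy square it may produce a whole block of consecutive low-length squares, and I must show that the squares flanking any such maximal block are $8$-adjacent (equivalently, that the net lattice displacement across the block lies in $\{-1,0,1\}^2$). I would control this using the optimality of $\pi^*$ — a shortest weighted path cannot make many tiny zig-zag steps — together with the length budget $\sum\ell_i$ over the block, and would separately check the endpoint squares containing $s$ and $g$, where $\pi^*$ starts and ends at a centre so the incident terms are free. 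Verifying that the extremal angle, and no other local configuration, attains $\sqrt2+1$ is then a direct computation.
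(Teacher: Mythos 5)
There is a genuine gap, and it sits exactly where you predicted: the deletion step. Your charging scheme forces the survival threshold up to $2-\sqrt{2}\approx 0.586$, because you bound the cost attributed to every kept square uniformly by $\sqrt{2}\,\beta_i$ and need $\sqrt{2}/\ell_i\leq\sqrt{2}+1$. But with a threshold above $1/2$ the structural claim you rely on --- that the squares flanking any maximal block of deleted squares are $8$-adjacent --- is simply false. Take all weights equal to $1$, $s=(0.5,-3.5)$, $g=(1.5,4.5)$; the shortest path is the straight segment of slope $8$, which spends length $\approx 0.504<2-\sqrt{2}$ in each of the two consecutive squares $[0,1]\times[0,1]$ and $[1,2]\times[0,1]$, while the flanking squares $[0,1]\times[-1,0]$ and $[1,2]\times[1,2]$ are both long. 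Deleting the two short squares leaves flanking centres at lattice displacement $(1,2)$, a knight's move, which is not an edge of the grid graph; no appeal to the optimality of $\pi^*$ can rescue this, since the example \emph{is} an optimal path. The paper's corresponding connectivity argument (its Lemma~\ref{lem:gogocurry2}) works only because it uses a threshold $a\leq 1/2$: two consecutive squares each carrying $\pi^*$-length less than $a$ then have combined length less than $1$, which forces the path to turn back and yields the corner-intersection property. Your threshold destroys precisely that inequality.

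The reason the paper can afford the smaller threshold $a=\sqrt{2}-1\leq 1/2$ and still reach $\sqrt{2}+1$ is a refinement you discard when you bound every kept square's cost by $\sqrt{2}\,\beta_i$. The paper distinguishes how $P$ traverses a square: if only horizontal/vertical half-edges are used the cost is $1\cdot\beta_i$ and the ratio is $1/a$; if a diagonal edge is used (cost up to $\sqrt{2}\,\beta_i$), its Lemma~\ref{lem:gogocurry1} guarantees that the two ``off-diagonal'' neighbouring squares are \emph{not} in $V_a$, which forces $\|\pi^*_{xy}\|\geq 1-a$ rather than merely $\geq a$ in that square. This gives the per-square bound $\max\{1/a,\sqrt{2}/(1-a)\}$, optimized at $a=1/(1+\sqrt{2})$ to give $\sqrt{2}+1$. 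So your overall architecture (delete barely-touched squares, charge square by square) matches the paper's, but the single uniform threshold cannot be made to work: you need either the paper's case analysis tying the $\sqrt{2}$-cost configurations to a larger lower bound on $\|\pi^*_{xy}\|$, or some genuinely new argument for reconnecting across blocks of deleted squares --- and the latter does not exist at threshold $2-\sqrt{2}$.
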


For a real number $a$, we denote $V_a = \{v_{xy}\mid \|\pi^\ast_{xy}\|\geq a\}$.
Jaklin~\cite{JaklinPHD} showed that the subgraph of $G$ induced by $V_{1/2}$ is connected.
Since $\|P^\ast_{xy}\|\leq \sqrt{2}$ for each square $S_{xy}\in \mathcal{S}$, 
the ratio between $\|P^\ast_{xy}\|$ and $\|\pi^\ast_{xy}\|$ for $v_{xy}\in V_{1/2}$ is at most $2\sqrt{2}$, which implies that $C(P)\leq 2\sqrt{2} C(\pi^{*})$.

To improve the approximation ratio as Theorem~\ref{main3}, we construct a subgraph on $V_a$ for $0\leq a\leq 1/2$, and find a path with a special property~(Lemma~\ref{lem:gogocurry1}).
Then Theorem~\ref{main3} follows by optimizing the parameter $a$.

To this end, we first construct a subgraph $H=(V(H), E(H))$ of the grid graph $G$ as follows.
\[
V(H)=V_a
\ \ \text{and}\ \ 
E(H)=\left\{(v_{xy}, v_{x'y'})\mid |x-x'|+|y-y'|=1
\right\}.
\]
We note that the graph $H$ has only horizontal and vertical edges.
When traversing $\pi^\ast$ from $s$ to $g$, $\pi^\ast$ passes through connected components of $H$.
We denote by  $H_{1},\dots,H_k$ a sequence of connected components of $H$ that $\pi^\ast$ goes through in this order from $s$ to $g$.
Then $s\in V(H_{1})$ and $g\in V(H_k)$.
We remark that the sequence may contain the same connected components more than once.
Let $\mathcal{S}_i=\{S_{xy}\ |\ v_{xy}\in V(H_{i})\}$, and $\overline{\mathcal{S}_i}$ denote the square tessellation covering $\mathcal{S}_i$ for each $i\in [k]$.

We show the following lemma.

\begin{lem}\label{lem:gogocurry2}
Let $a$ be a number with $0\leq a\leq 1/2$.
For any $i\in [k-1]$, 
$\overline{\mathcal{S}_i}$ and $\overline{\mathcal{S}_{i+1}}$ intersect with the corner such that $\pi^\ast$ passes through at least three of the four squares around this corner consecutively.
\end{lem}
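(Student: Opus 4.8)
The plan is to analyze a single transition $i\to i+1$ locally, around the point where $\pi^\ast$ leaves the region $\overline{\mathcal{S}_i}$ and first reaches $\overline{\mathcal{S}_{i+1}}$. Let $A$ be the last square of $\mathcal{S}_i$ that $\pi^\ast$ visits before entering $\mathcal{S}_{i+1}$, and let $B$ be the first square of $\mathcal{S}_{i+1}$ that it reaches afterwards. Since $H_1,\dots,H_k$ lists the components in the order $\pi^\ast$ meets them, every square that $\pi^\ast$ crosses strictly between $A$ and $B$ lies in no $\mathcal{S}_j$; in particular each such intermediate square $S_{xy}$ satisfies $v_{xy}\notin V_a$, i.e. $\|\pi^\ast_{xy}\|<a\le 1/2$. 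This is the only place the bound $a\le 1/2$ enters, and it converts a combinatorial statement about components into a purely metric one about short chords.

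Next I would record the elementary geometric fact underlying everything: inside each square $\pi^\ast$ is a straight segment, so each intermediate square carries a chord of length $<1/2$. A chord of a unit square joining two opposite sides has length at least $1$, so such a short chord must join two \emph{adjacent} sides; writing $d_1,d_2$ for the distances of its endpoints from the common corner of those two sides, the relation $\sqrt{d_1^2+d_2^2}<1/2$ forces $d_1,d_2<1/2$. Hence every intermediate chord ``clips'' the corner where its two sides meet and stays within distance $<1/2$ of that corner.

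The crux is to show that all these clipped corners are one and the same grid point $c$, and that $A$ and $B$ share it too. Whenever $\pi^\ast$ passes from one square to the next it does so through a point $p$ on their common unit edge. If both squares are intermediate, then $p$ is an endpoint of a clipped chord on each side, so $p$ lies within distance $<1/2$ of the corner clipped on each side; as both such corners are endpoints of the same unit edge and the two endpoints are at distance $1$, a single point cannot be within $1/2$ of both, so the two corners must coincide. If instead the crossing is between $A$ (or $B$) and an adjacent intermediate square, then on the intermediate side $p$ is within $<1/2$ of the clipped corner $c$, and since $c$ is an endpoint of the shared edge it is automatically a vertex of $A$ (resp.\ $B$). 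Propagating these equalities along the whole transition yields a single grid corner $c$ that is a vertex of $A$, of $B$, and of every intermediate square; in particular $c\in\overline{\mathcal{S}_i}\cap\overline{\mathcal{S}_{i+1}}$, so the two regions meet at the corner $c$.

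Finally I would read off the three-squares condition from the four squares around $c$. These four squares form a $4$-cycle under edge-adjacency, and $\pi^\ast$ moves among them only by crossing edges incident to $c$, so it walks along this cycle. Because $A$ and $B$ lie in different components of $H$ while both lie in $V_a$, they cannot be horizontally or vertically adjacent; among the four squares around $c$ this leaves only the possibility that $A$ and $B$ are the diagonal pair. A walk on the $4$-cycle between the two diagonal vertices passes through one of the two side squares, so $\pi^\ast$ goes $A\to(\text{intermediate})\to B$ and thereby passes consecutively through at least three of the four squares around $c$, as claimed. The main obstacle is the single-corner argument of the previous paragraph, which is what actually pins the whole transition down to one grid point; a secondary point is the degenerate case in which $\pi^\ast$ runs exactly through $c$ from $A$ straight into $B$ with no intermediate square (meeting only two of the four squares), which I would dispose of by a general-position assumption on $\pi^\ast$, perturbing it infinitesimally so that it avoids passing exactly through grid corners.
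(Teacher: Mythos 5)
Your proof is correct in substance, but it reaches the conclusion by a genuinely different route than the paper. The paper fixes the first square $S_{xy}$ entered after $\pi^\ast$ leaves $\overline{\mathcal{S}_i}$ for the last time and then performs a forced-move case analysis (after symmetry reductions): each successive short chord is forced to turn around the same corner, and if the path does not immediately reach $\mathcal{S}_{i+1}$ it is driven back into $\mathcal{S}_i$, contradicting the ``last time'' choice. You instead package the same metric fact --- a chord of length $<1/2$ must join two adjacent sides and stay within distance $1/2$ of their common corner --- into a global invariant: every crossing point between consecutive intermediate squares is within $1/2$ of the clipped corner on both sides, so all clipped corners coincide at a single grid point $c$, which is then automatically a vertex of the last square $A\in\mathcal{S}_i$ and the first square $B\in\mathcal{S}_{i+1}$. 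The three-squares condition then falls out of pure combinatorics on the $4$-cycle of squares around $c$ (since $A$ and $B$, being in $V_a$ but in different components of $H$, must be the diagonal pair, the walk is forced to be $A\to X\to B$). This buys a cleaner argument with no WLOG case analysis and no proof by contradiction, at the cost of having to state the ``single clipped corner'' propagation carefully. One caveat: your perturbation fix for the degenerate case where $\pi^\ast$ passes exactly through $c$ from $A$ into $B$ is more delicate than you suggest, because $\pi^\ast$ is the optimum and the sets $V_a$, the components $H_i$, and the final bound $C(P)\leq(\sqrt2+1)C(\pi^\ast)$ are all defined relative to it; one would need to run the argument on a general-position path of length at most $(1+\varepsilon)C(\pi^\ast)$ and let $\varepsilon\to 0$. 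The paper's proof does not address this degeneracy either (its assertion that $v_{xy}\notin V_a$ for the first square entered after leaving $\overline{\mathcal{S}_i}$ fails exactly there), so this is not a defect of your approach relative to the paper's.
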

\begin{proof}
First observe that $\overline{\mathcal{S}_i}$ and $\overline{\mathcal{S}_{i+1}}$ do not share any edge of squares, as otherwise this
 contradicts that $H_{i}$ and $H_{i+1}$ are different connected components in $H$.

Suppose to the contrary that there exists $i\in [k-1]$ such that $\overline{\mathcal{S}_i}$ and $\overline{\mathcal{S}_{i+1}}$ have no intersection.
Since $\pi^{*}$ is a connected line segment from $s$ to $g$, 
$\pi^{*}$ has a line segment leaving from $\overline{\mathcal{S}_i}$.
We denote by $S_{xy}$ the square that $\pi^{*}$ enters just after $\pi^{*}$ leaves from $\overline{\mathcal{S}_i}$ \textit{for the last time}.
Then, it follows that
$v_{xy}\notin V_a$.
The square that $\pi^{*}$ enters next to $S_{xy}$ is denoted by $S_{x'y'}$.
By the choice of $S_{xy}$, we see that $S_{x'y'}\not\in \mathcal{S}_{i}$.
We note that $|x-x'|\leq 1$ and $|y-y'|\leq 1$ hold.

By symmetry, we may assume that $\pi^\ast$ enters to $S_{xy}$ from $S_{x, y-1}$ and that $S_{x, y-1}\in \mathcal{S}_i$.
Since $v_{xy}\not\in V_a$, that is,  $\|\pi^\ast_{xy}\|<a\leq 1/2$, it holds that $x'=x+1$ or $x'=x-1$.
By symmetry, we may assume that $x'=x+1$.
Moreover, since $v_{xy}\not\in V_a$, it holds that $y'=y$ or $y'=y-1$.

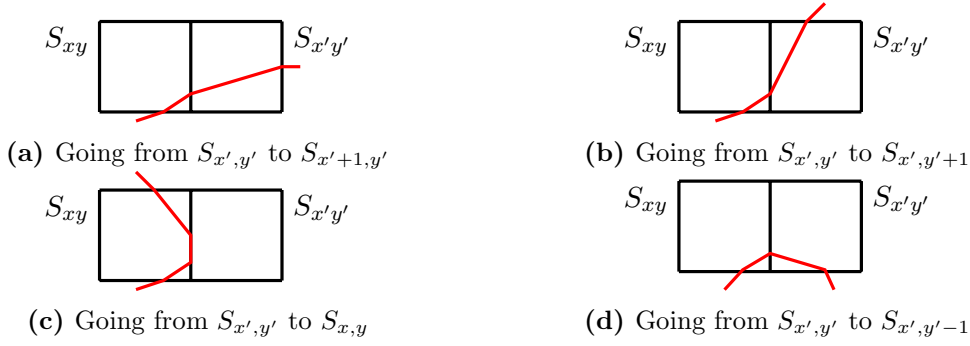
\begin{figure}[t]
    \begin{tabular}{cc}
      \begin{minipage}[t]{0.45\hsize}
       \centering

\begin{tikzpicture}[scale=1.2]
    \draw[step=1cm,very thick](5,5) grid (7,6);
    \draw (5,5.5) [above left]node{\small$S_{xy}$};
    \draw (7,5.5) [above right]node{\small$S_{x'y'}$};
    \draw [red,very thick] (5.4,4.9)--(5.7,5)--(6.0,5.2)--(7,5.5)--(7.2,5.5);
    \end{tikzpicture}
    \subcaption{Going from $S_{x',y'}$ to $S_{x'+1,y'}$}
              \end{minipage} &
      \begin{minipage}[t]{0.45\hsize}
        \centering
\begin{tikzpicture}[scale=1.2]
    \draw[step=1cm,very thick](5,5) grid (7,6);
    \draw (5,5.5) [above left]node{\small$S_{xy}$};
    \draw (7,5.5) [above right]node{\small$S_{x'y'}$};
    \draw [red,very thick] (5.4,4.9)--(5.7,5)--(6.0,5.2)--(6.4,6)--(6.6,6.2);
    \end{tikzpicture}
    \subcaption{Going from $S_{x',y'}$ to $S_{x',y'+1}$}

        \end{minipage} \\
   
      \begin{minipage}[t]{0.45\hsize}
        \centering

    \begin{tikzpicture}[scale=1.2]
    \draw[step=1cm,very thick](5,5) grid (7,6);
    \draw (5,5.5) [above left]node{\small$S_{xy}$};
    \draw (7,5.5) [above right]node{\small$S_{x'y'}$};
    \draw [red,very thick] (5.4,4.9)--(5.7,5)--(6.0,5.2)--(6.0,5.5)--(5.6,6)--(5.4,6.2);
    \end{tikzpicture}
    \subcaption{Going from $S_{x',y'}$ to $S_{x,y}$}

      \end{minipage} &
      \begin{minipage}[t]{0.45\hsize}
        \centering

\begin{tikzpicture}[scale=1.2]
    \draw[step=1cm,very thick](5,5) grid (7,6);
    \draw (5,5.5) [above left]node{\small$S_{xy}$};
    \draw (7,5.5) [above right]node{\small$S_{x'y'}$};
    \draw [red,very thick] (5.5,4.8)--(5.7,5.02)--(6.0,5.2)--(6.6,5.02)--(6.7,4.8);
    \end{tikzpicture}
    \subcaption{Going from $S_{x',y'}$ to $S_{x',y'-1}$}
      
      \end{minipage} 
    \end{tabular}
    \caption{Proof of Lemma~\ref{lem:gogocurry2} when $y'=y$. Only the last case~(d) is possible if $v_{xy}, v_{x'y'}\not\in V_a$.}
    \label{fig:simpleclaim}
  \end{figure}

First suppose that $y'=y$.
If $S_{x'y'}\in \mathcal{S}_{i+1}$, then the lemma holds with the corner between $S_{x,y-1}$ and $S_{x'y'}$.
Thus we may assume that $S_{x'y'}\notin\mathcal{S}_{i+1}$, that is, $v_{x'y'}\notin V_a$.

\begin{claim}\label{clm:simpleclaim}
The shortest path $\pi^\ast$ enters to $S_{x',y'-1}$ after leaving from $S_{x'y'}$.
\end{claim}
\begin{proof}
We observe that $\pi^\ast$ cannot go from $S_{x'y'}$ to $S_{x'+1, y'}$, $S_{x', y'+1}$, or $S_{x, y}$.
In fact, otherwise, we see that $\|\pi^\ast_{xy}\|+\|\pi^\ast_{x'y'}\|\geq 1$, which would contradict that $v_{xy}, v_{x'y'}\not\in V_a$, as $a\leq 1/2$.
See Figure~\ref{fig:simpleclaim} for illustration.
Thus $\pi^\ast$ leaves from $S_{x'y'}$ to  $S_{x',y'-1}$.
\end{proof}

By Claim~\ref{clm:simpleclaim},
$\pi^\ast$ goes from $S_{x'y'}$ to  $S_{x',y'-1}$.
Then, $v_{x',y'-1}$ does not belong to $V_a$, as otherwise $S_{x',y'-1}$ is contained in $\mathcal{S}_i$, which contradicts the choice of $S_{xy}$.
However, applying the same argument as Claim~\ref{clm:simpleclaim} to the two consecutive squares $S_{x'y'}, S_{x',y'-1}$,  we see that $\pi^\ast$ leaves from $S_{x',y'-1}$ to $S_{x'-1, y'-1}$.
See Figure~\ref{fig:dejimon7}.
This contradicts the choice of $S_{xy}$, as $S_{x'-1, y'-1}=S_{x, y-1}\in \mathcal{S}_i$.

\begin{figure}[tbp]
  \begin{minipage}[b]{0.5\columnwidth}
    \centering
    \begin{tikzpicture}[scale=0.8]
    \draw[step=1cm,very thick](5,5) grid (7,6);
    \draw[step=1cm,very thick](6,4) grid (7,5);
    \draw[step=1cm,very thick,blue](4,3) grid (6,5);
    \draw (4,5) [left]node{$\overline{\mathcal{S}_{i}}$};
    \draw (5,6) [left]node{\small$S_{xy}$};
    \draw (7,6) [right]node{\small$S_{x'y'}$};
    \draw (7,4) [right]node{\small$S_{x',y'-1}$};
    \draw[red,very thick] (5,4.8)--(5.6,5)--(6,5.1)--(6.3,5)--(6,4.5);
    \end{tikzpicture}
    \caption{When $\pi^\ast_{xy}$ goes from $S_{x'y'}$ to $S_{x', y'-1}$.}
    \label{fig:dejimon7}
  \end{minipage}
  \begin{minipage}[b]{0.48\columnwidth}
    \centering
\begin{tikzpicture}[scale=1.2]
    \draw[step=1cm,very thick](5,5) grid (6,6);
    \draw[step=1cm,very thick](6,5) grid (7,4);
    \draw[step=1cm,blue, very thick](5,5) grid (6,4);
    \draw (5,4.8) [below left]node{\small$\mathcal{S}_{i}$};
    \draw (5,5.8) [above left]node{\small$S_{xy}$};
    \draw (7,4.8) [below right]node{\small$S_{x'y'}$};
    \draw (7,5.8) [above right]node{\small$S_{x',y'+1}$};
    \draw [red,very thick] (5.2,4.6)--(5.7,5.02)--(6.0,5.02)--(6.6,5.02)--(7,5.7);
    \end{tikzpicture}
    \caption{When $y'=y-1$.}
    \label{fig:cross}

  \end{minipage}
\end{figure}

Next suppose that $y'=y-1$.
Then $\pi^\ast_{xy}$ goes along the bottom side of $S_{xy}$ to the lower-right corner of $S_{xy}$~(or the upper-left corner of $S_{x'y'}$).
See Figure~\ref{fig:cross}.
Since $v_{x'y'}\not\in V_a$, $\pi^\ast_{x'y'}$ goes along the top side of $S_{x'y'}$, as otherwise, $\pi^\ast_{x'y'}$ goes along the left side of $S_{x'y'}$ and enters to $S_{x,y-1}$ in $\mathcal{S}_i$, which contradicts the choice of $S_{xy}$.
Hence $\pi^\ast$ enters to $S_{x', y'+1}$ after leaving from $S_{x'y'}$.
Since $v_{x'y'}\not\in V_a$ and $\|\pi^\ast_{x'y'}\|+\|\pi^\ast_{x',y'+1}\|\geq 1$, we see that $v_{x', y'+1}\in V_a$.
This implies that the lemma holds with the corner between $S_{x, y-1}$ and $S_{x', y'+1}$.

Thus the lemma holds.
\end{proof}

The above lemma implies the existence of a path of the grid graph $G$ such that the used diagonal edges satisfy the following conditions.

\begin{lem}\label{lem:gogocurry1}
For any number $a$ with $0\leq a\leq \frac{1}{2}$,
there exists a path $P$ of the grid graph $G$ from $s$ to $g$ that satisfies the following two conditions.
\begin{itemize}
\item $V(P)\subseteq V_a$.
\item For any diagonal edge $e=(v_{xy},v_{x'y'})$ in $P$,
it holds that $v_{xy'}, v_{x'y}\notin V_a$, and that $\pi^\ast$ passes through at least three of $\{v_{xy}, v_{x'y'}, v_{xy'}, v_{x'y}\}$ in a consecutive way.
\end{itemize}
\end{lem}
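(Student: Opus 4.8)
The plan is to build $P$ by stitching together, in order, the components $H_1,\dots,H_k$ that $\pi^\ast$ visits, using only horizontal and vertical edges inside each component and exactly one diagonal edge to jump from each $H_i$ to $H_{i+1}$. Concretely, I would process the components in the order $H_1,\dots,H_k$ and, within $H_i$, route between its designated entry and exit vertices by a path in $H_i$. Since $H_i$ is a connected component of $H$ and $H$ has only horizontal and vertical edges whose endpoints all lie in $V_a=V(H)$, such a path exists and stays inside $V_a$. The entry vertex of $H_1$ is $s$ and the exit vertex of $H_k$ is $g$, which lie in $V(H_1)$ and $V(H_k)$ respectively.

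For the transition from $H_i$ to $H_{i+1}$, I would invoke Lemma~\ref{lem:gogocurry2} to obtain the corner $c$ at which $\overline{\mathcal{S}_i}$ and $\overline{\mathcal{S}_{i+1}}$ meet, with $\pi^\ast$ traversing at least three of the four squares incident to $c$ consecutively. Let those four squares be $S_{xy}, S_{x'y}, S_{xy'}, S_{x'y'}$ with $|x-x'|=|y-y'|=1$. Since $\overline{\mathcal{S}_i}$ and $\overline{\mathcal{S}_{i+1}}$ are distinct components meeting only at $c$, the two squares of $V_a$ at $c$ cannot be horizontally or vertically adjacent (otherwise they would lie in one component of $H$); hence they form the diagonal pair, say $S_{xy}\in\mathcal{S}_i$ and $S_{x'y'}\in\mathcal{S}_{i+1}$. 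I then take the diagonal edge $(v_{xy},v_{x'y'})$ as the transition edge, using $v_{xy}$ as the exit vertex of $H_i$ and $v_{x'y'}$ as the entry vertex of $H_{i+1}$.

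It then remains to verify the two required properties, noting first that the only diagonal edges of $P$ are these transition edges. All vertices of $P$ lie in $V_a$, since the within-component portions stay in $V(H_i)\subseteq V_a$ and each transition edge connects two such vertices. For the diagonal-edge condition, the off-diagonal vertices $v_{xy'},v_{x'y}$ must lie outside $V_a$: if, say, $v_{x'y}\in V_a$, then $S_{x'y}$ is horizontally adjacent to $S_{xy}\in\mathcal{S}_i$ and vertically adjacent to $S_{x'y'}\in\mathcal{S}_{i+1}$, forcing $H_i$ and $H_{i+1}$ into a single component of $H$, a contradiction; the case $v_{xy'}\in V_a$ is symmetric. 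The property that $\pi^\ast$ passes through at least three of $\{v_{xy},v_{x'y'},v_{xy'},v_{x'y}\}$ consecutively is exactly what Lemma~\ref{lem:gogocurry2} supplies.

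The object produced this way is a priori a walk rather than a simple path, since the sequence $H_1,\dots,H_k$ may repeat a component. I expect the only points needing care to be (i) this reduction, where removing closed subwalks yields a simple path $P$ whose vertex set is still contained in $V_a$ and whose surviving diagonal edges still satisfy the stated condition, and (ii) checking that the entry/exit vertices handed to each component indeed lie in that component so that the within-component routing is well defined. Both are routine given Lemma~\ref{lem:gogocurry2} and the connectivity of each $H_i$, so the substance of the argument is the component-adjacency observation above that pins each transition to a single diagonal edge with both off-diagonal corners outside $V_a$.
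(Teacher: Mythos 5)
Your proposal is correct and takes essentially the same route as the paper: route within each component $H_i$ using only horizontal and vertical edges, and join consecutive components by the single diagonal edge at the corner supplied by Lemma~\ref{lem:gogocurry2}. You in fact spell out two points the paper leaves implicit --- that the two $V_a$-squares at the shared corner must form the diagonal pair, and that $v_{xy'},v_{x'y}\notin V_a$ since otherwise $H_i$ and $H_{i+1}$ would lie in one component of $H$ --- so there is no gap.
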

\begin{proof}
First consider the case when the graph $H$ defined above is connected.
Then the vertices $s$ and $g$ are contained in the same connected component, which implies that there exists a path from $s$ to $g$ only using horizontal and vertical edges in $H$.
Thus the lemma holds.

Assume that $H$ has at least two connected components.
By Lemma~\ref{lem:gogocurry2}, $\overline{\mathcal{S}_{i}}$ and $\overline{\mathcal{S}_{i+1}}$ intersect with the corner of some square  for any $i\in[k-1]$.
We assume that the corner is of the $(x_i,y_i)$-th square in $\mathcal{S}_i$ and the $(x'_{i+1},y'_{i+1})$-th square in $\mathcal{S}_{i+1}$.
We then append the edge between them to the graph $H$ for each $i\in [k-1]$.
The resulting graph becomes connected, and therefore there exists a path $P$ from $s$ to $g$.
Specifically, the path $P$ is constructed in the following way: (1) For each $i\in [k]$, we find a path $P_i$ of $H_{i}$ from $v_{x'_{i},y'_{i}}$ to $v_{x_{i},y_{i}}$ using only horizontal and vertical edges;
(2) For each $i\in [k-1]$, connect the end vertices of $P_i$ and $P_{i+1}$ with the diagonal edge between $v_{x_{i},y_{i}}$ and $v_{x'_{i+1},y'_{i+1}}$.
The obtained path $P$ satisfies the condition of the lemma, as $V(H)=V_a$.
\end{proof}

For a number $a$ with $0\leq a\leq 1/2$, we denote by $P$ the path satisfying the condition of Lemma~\ref{lem:gogocurry1}.

\begin{lem}\label{dejimon7}
For any $(x, y)$ with $v_{xy}\in V_a$, it holds that
\begin{equation}\label{eq:approx}
\frac{\|P_{xy}\|}{\|\pi^{*}_{xy}\|}
\leq 
\max{\left\{\frac{1}{a},\frac{\sqrt{2}}{1-a}\right\}}.
\end{equation}
\end{lem}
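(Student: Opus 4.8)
The plan is to bound the numerator and the denominator of \eqref{eq:approx} separately, splitting on whether $P$ uses a diagonal edge at $v_{xy}$. The denominator is handled for free: since $v_{xy}\in V_a$ we have $\|\pi^\ast_{xy}\|\ge a$. For the numerator I would first record that the portion of $P$ inside $S_{xy}$ consists of at most two half-edges through the center $v_{xy}$, each of Euclidean length $\tfrac12$ if straight and $\tfrac{\sqrt2}{2}$ if diagonal; hence $\|P_{xy}\|\le 1$ whenever no diagonal edge of $P$ is incident to $v_{xy}$, and $\|P_{xy}\|\le\sqrt2$ in general. This immediately disposes of the easy case: if no diagonal edge is incident to $v_{xy}$, then $\|P_{xy}\|/\|\pi^\ast_{xy}\|\le 1/a$, which is dominated by the right-hand side of \eqref{eq:approx}.

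The substantive case is when some diagonal edge $e=(v_{xy},v_{x'y'})$ of $P$ is incident to $v_{xy}$; there $\|P_{xy}\|\le\sqrt2$, so it suffices to prove $\|\pi^\ast_{xy}\|\ge 1-a$, which yields the ratio $\sqrt2/(1-a)$. Here I would invoke Lemma~\ref{lem:gogocurry1}: the two off-corner squares satisfy $\|\pi^\ast_{xy'}\|,\|\pi^\ast_{x'y}\|<a$, and $\pi^\ast$ passes consecutively through at least three of the four squares meeting at the common corner $c$ of $S_{xy}$ and $S_{x'y'}$. Since $S_{xy}$ and $S_{x'y'}$ are the only two of these four squares lying in $V_a$, are diagonally opposite (hence not edge-adjacent), and carry the two ends of the $P$-edge $e$, the consecutive triple must be $S_{xy}\to(\text{off-diagonal})\to S_{x'y'}$; in particular $\pi^\ast$ leaves $S_{xy}$ into an off-diagonal square, crosses it with length $<a$, and then enters $S_{x'y'}$.

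The geometric heart of the argument is then the following. The off-diagonal square is crossed with length $<a$, and the two edges of it along which $\pi^\ast$ enters and leaves meet precisely at $c$; consequently both crossing points, and in particular the exit point $A$ of $\pi^\ast$ from $S_{xy}$, lie within distance $a$ of $c$ on one of the two edges of $S_{xy}$ adjacent to $c$. Because $\pi^\ast$ is a weighted shortest path and weights are constant inside a square, the relevant portion $\pi^\ast_{xy}$ is a straight segment, so $\|\pi^\ast_{xy}\|$ equals the distance from $A$ to the entry point $F$ of $\pi^\ast$ into $S_{xy}$. A short computation in the unit square shows that if $F$ lies on either of the two edges \emph{not} adjacent to $c$, then $\|AF\|>1-a$, which is exactly the desired bound.

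The main obstacle, and the step on which I would spend the most care, is verifying that $\pi^\ast$ really enters $S_{xy}$ through an edge not adjacent to $c$, thereby ruling out the degenerate possibility that $\pi^\ast$ both enters and leaves $S_{xy}$ near $c$ and makes $\|\pi^\ast_{xy}\|$ small. The two edges of $S_{xy}$ adjacent to $c$ are exactly the edges shared with the two off-diagonal squares, so an entry through such an edge would force $\pi^\ast$ to arrive from a non-$V_a$ square. I would exclude this by appealing to the way $c$ is produced in Lemma~\ref{lem:gogocurry2}, namely as the last place where $\pi^\ast$ leaves the component $\overline{\mathcal S_i}$ containing $S_{xy}$: the square preceding $S_{xy}$ along $\pi^\ast$ belongs to $\overline{\mathcal S_i}$ and therefore cannot be one of the off-diagonal squares, pinning the entry edge to a non-adjacent one. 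Tracking the two configurations ($y'=y$ and $y'=y-1$) of Lemma~\ref{lem:gogocurry2} separately and exploiting symmetry to cut down the cases, I expect this to close the argument and establish $\|\pi^\ast_{xy}\|\ge 1-a$, completing the proof of \eqref{eq:approx}.
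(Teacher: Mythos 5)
Your proposal is correct and follows essentially the same route as the paper: non-diagonal configurations are dispatched by $\|\pi^{*}_{xy}\|\ge a$, and for a square incident to a diagonal edge of $P$ you invoke Lemma~\ref{lem:gogocurry1} to place the crossing point of $\pi^{*}$ within distance $a$ of the shared corner and conclude $\|\pi^{*}_{xy}\|\ge 1-a$, which is exactly the paper's right-triangle argument for its cases (iii)--(v), including the delicate point (also treated only briefly in the paper) that $\pi^{*}$ must enter the square through a side not adjacent to that corner. The only difference is cosmetic: the paper enumerates five patterns of $P_{xy}$ and derives slightly sharper per-case ratios, but all of them are dominated by $\frac{\sqrt{2}}{1-a}$, so your coarser two-case split yields the same bound.
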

\begin{proof}
By symmetry, the possible patterns of $P_{xy}$ are the $5$ patterns as in Table~\ref{dejimon8}.
We will explain each case separately.
Cases~(i) and~(ii) are obvious, since $\|\pi^\ast_{xy}\|\geq a$ since $v_{xy}\in V_a$.

\begin{table}[t]
  \centering
  \caption{Relationship between $P_{xy}$~(black lines) and $\pi^{*}_{xy}$~(red lines).}
  \label{dejimon8}
  \begin{tabular}{c|c|c|c|c|c} 
 & (i) & (ii) & (iii) & (iv) & (v)\\ \hline
&&&&&\\
 & 
\begin{tikzpicture}[scale=1]
          \draw[step=1cm,very thin](0,0) grid (1,1);
          \draw (0.5,0)--(0.5,1);
          \draw[red] (0.2,0)--(1,0.7);
     \end{tikzpicture}
 &  \begin{tikzpicture}[scale=1]
          \draw[step=1cm,very thin](0,0) grid (1,1);
          \draw (0.5,0)--(0.5,0.5)--(1,0.5);
          \draw[red] (0.2,0)--(1,0.7);
     \end{tikzpicture}
&  \begin{tikzpicture}[scale=1]
          \draw[step=1cm,very thin](0,0) grid (1,1);
          \draw (0,0)--(0.5,0.5)--(1,0.5);
          \draw[red] (0.2,0)--(1,0.7);
     \end{tikzpicture}
& \begin{tikzpicture}[scale=1]
          \draw[step=1cm,very thin](0,0) grid (1,1);
          \draw (0,0)--(1,1);
          \draw[red] (0.2,0)--(1,0.7);
     \end{tikzpicture}
& 
\begin{tikzpicture}[scale=1]
          \draw[step=1cm,very thin](0,0) grid (1,1);
          \draw (0,0)--(0.5,0.5)--(1,0);
          \draw[red] (0,0)--(0.8,0);
     \end{tikzpicture}\\ \hline
 $\|P_{xy}\|$ & $1$ & 1 & $\frac{1+\sqrt{2}}{2}$ & $\sqrt{2}$ & $\sqrt{2}$\\ \hline
 Lower bound of  $\|\pi^{*}_{xy}\|$ & $a$ & $a$ & $1-a$ &  $\min\{1, \sqrt{2}(1-a)\}$ & $1-a$\\ \hline
 Upper bound of $\frac{\|P_{xy}\|}{\|\pi^{*}_{xy}\|}$ & $\frac{1}{a}$ & $\frac{1}{a}$ & $\frac{1+\sqrt{2}}{2(1-a)}$ & $\frac{1}{1-a}$ & $\frac{\sqrt{2}}{1-a}$\\ 
  \end{tabular}
\end{table}

Consider Case~(iii), that is, the $(x, y)$-th square $S_{xy}$ has the subpath $P_{xy}$ in the form of Case~(iii).
Then, $\|P_{xy}\|=(1+\sqrt{2})/2$.
It follows from Lemma~\ref{lem:gogocurry1} that 
$v_{x-1, y}, v_{x, y-1}\not\in V_a$ and 
$v_{x, y}, v_{x-1, y-1}\in V_a$.
See Figure~\ref{dejimon9}.
We will show that, if $\|\pi^\ast_{xy}\|< 1$, then we have $\|\pi^\ast_{xy}\|\geq 1-a$.
Since $\|\pi^\ast_{xy}\|< 1$, $\pi^\ast_{xy}$ is a path between a horizontal side and a vertical side.
We denote by $A$ and $C$ the intersecting 
points of $\pi^\ast_{xy}$ to these sides, respectively, and by $B$ the corner of $S_{xy}$ so that $ABC$ forms a right triangle.
We remark that the point $A$ or $C$ is on the side adjacent to $S_{x-1,y}$ or $S_{x,y-1}$ by Lemma~\ref{lem:gogocurry1}.
We may assume that $A$ is on the bottom side of $S_{x,y}$.
Since $v_{x, y-1}\not\in V_a$, $\|\pi^\ast_{x, y-1}\|$ is at most $a$, and hence we see that the side $AB$ has length at least $1-a$.
Hence, the diagonal side $AC=\|\pi^\ast_{xy}\|$ is of length at least $1-a$. 
Thus, we have $
\frac{\|P_{xy}\|}{\|\pi^{*}_{xy}\|}\leq\frac{1+\sqrt{2}}
{2(1-a)}$.

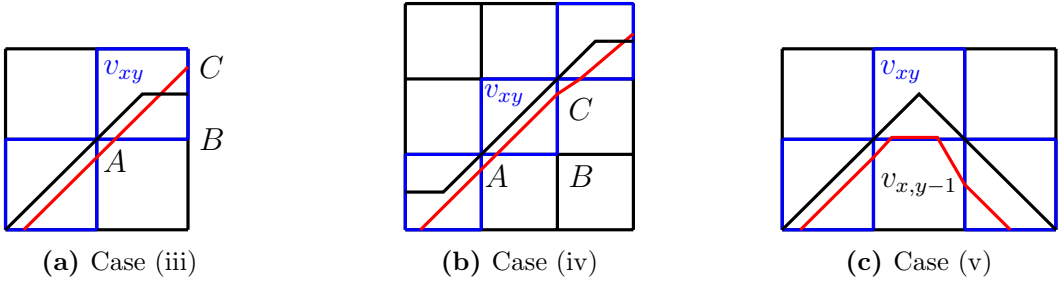
\begin{figure}[t]
\centering
\begin{minipage}[b]{0.32\columnwidth}
    \centering
    \begin{tikzpicture}[scale=1.2]
    \draw[step=1cm,very thick](0,0) grid (2,2);
    \draw[step=1cm,very thick,blue](0,0) grid (1,1);
    \draw[step=1cm,very thick,blue](1,1) grid (2,2);
    \draw[very thick,red] (0.2,0)--(1.2,1)--(2,1.8);
    \draw (1.3,1.5)[blue] node[above]{$v_{xy}$};
    \draw (2,1.8) node[right]{$C$};
    \draw (1,1) ;
    \draw (1.2,1) node[below]{$A$};
    \draw (2,1) node[right]{$B$};
    \draw[very thick] (0,0)--(1.5,1.5)--(2,1.5); 
    \end{tikzpicture}
    \subcaption{Case~(iii)}
    \label{dejimon9}
\end{minipage}
\begin{minipage}[b]{0.32\columnwidth}
    \centering
    \begin{tikzpicture}[scale=1]
    \draw[step=1cm,very thick](0,0) grid (3,3);
    \draw[step=1cm,very thick,blue](0,0) grid (1,1);
    \draw[step=1cm,very thick,blue](1,1) grid (2,2);
    \draw[step=1cm,very thick,blue](2,2) grid (3,3);
    \draw[very thick,red] (0.2,0)--(1.2,1)--(2,1.8)--(2.3,2)--(3,2.6);
    \draw (1.3,1.5)[blue] node[above]{\small $v_{xy}$};
    \draw (2,1.6) node[right]{$C$};
    \draw (1,1) ;
    \draw (1.2,1) node[below]{$A$};
    \draw (2,1) node[below right]{$B$};
    \draw[very thick] (0,0.5)--(0.5,0.5)--(2.5,2.5)--(3,2.5); 
    \end{tikzpicture}
    \subcaption{Case~(iv)}

\end{minipage}
\begin{minipage}[b]{0.32\columnwidth}
    \centering
    \begin{tikzpicture}[scale=1.2]
    \draw[step=1cm,very thick](0,0) grid (3,2);
    \draw[step=1cm,very thick,blue](0,0) grid (1,1);
    \draw[step=1cm,very thick,blue](1,1) grid (2,2);
    \draw[step=1cm,very thick,blue](2,1) grid (3,0);
    \draw[very thick] (0,0)--(1.5,1.5)--(3,0);
    \draw[very thick,red] (0.2,0)--(1,0.8)--(1.2,1.02)--(1.7,1.02)--(2,0.5)--(2.5,0);
    \draw (1.3,1.5)[blue] node[above]{$v_{xy}$};
    \draw (1.5,0.5) node{$v_{x,y-1}$};
     \end{tikzpicture}
    \subcaption{Case~(v)}
\end{minipage}
\caption{Proof of Lemma~\ref{dejimon7}. The blue squares depict $\overline{\mathcal{S}_i}$'s.}
\end{figure}

On Case~(iv), similarly to Case~(iii), we focus on the right triangle with diagonal edge $\|\pi^\ast_{xy}\|$, assuming that $\|\pi^\ast_{xy}\| < 1$.
Then, since  $v_{x-1, y}, v_{x, y-1}\not\in V_a$ and $v_{x+1, y}, v_{x, y+1}\not\in V_a$, we see that the both sides of the triangle are of lengths at least $1-a$.
Hence the diagonal side $\|\pi^\ast_{xy}\|$ has length at least $\sqrt{2}(1-a)$.
Since $\|P_{xy}\|=\sqrt{2}$, we have $
\frac{\|P_{xy}\|}{\|\pi^{*}_{xy}\|}\leq\frac{1}{1-a}$.

On Case~(v), 
assuming that $\|\pi^\ast_{xy}\| < 1$, 
we see by Lemma~\ref{lem:gogocurry1} that $\pi^\ast_{xy}$ is a path such that at least one end point is on the bottom side of $S_{xy}$.
Since $v_{x, y-1}$ is not in $V_a$, we have $\|\pi^\ast_{x, y-1}\|<a$, and hence $\|\pi^\ast_{xy}\|\geq 1-a$.
Thus we have $
\frac{\|P_{xy}\|}{\|\pi^{*}_{xy}\|}\leq\frac{\sqrt{2}}{1-a}$.

Therefore, the ratio can be summarized as in the last row of Table~\ref{dejimon8}, implying that 
\[
\frac{\|P_{xy}\|}{\|\pi^{*}_{xy}\|}\leq\max{\left\{\frac{1}{a},\frac{1+\sqrt{2}}{2(1-a)},\frac{1}{1-a},\frac{\sqrt{2}}{1-a}\right\}}.
\]
Since we have
\[
\frac{1}{1-a}<\frac{1+\sqrt{2}}{2(1-a)}<\frac{\sqrt{2}}{1-a},
\]
the right-hand side can be replaced by $
\max{\left\{\frac{1}{a},\frac{\sqrt{2}}{1-a}\right\}}$.
Thus the lemma holds.
\end{proof}

We are now ready to prove Theorem~\ref{main3}.
The RHS of~\eqref{eq:approx} is minimized by setting $\frac{1}{a}=\frac{\sqrt{2}}{1-a}$.
Hence, when $a=\frac{1}{1+\sqrt{2}}$, 
the ratio between them is maximized, whose value is $\frac{1}{a}=\sqrt{2}+1$.
Thus we have $\|P_{xy}\|\leq (\sqrt{2}+1)\|\pi^{*}_{xy}\|$ for any $(x, y)$ with $v_{xy}\in V_a$.

Therefore, since $P$ is a path of the grid graph $G$, the shortest path $P^\ast$ of $G$ has the length bounded by the following.
\begin{align*}
C(P^\ast)&\leq C(P)=\sum_{(x,y):v_{xy}\in V_a}\alpha_{xy}\|P_{xy}\|
\leq\sum_{(x,y):v_{xy}\in V_a}\alpha_{xy}\left(\sqrt{2}+1\right)\|\pi^{*}_{xy}\|\\
&\leq \left(\sqrt{2}+1\right)\sum_{(x,y)\in[m]\times [n]}\alpha_{xy}\|\pi^{*}_{xy}\|=\left(\sqrt{2}+1\right)C(\pi^{*}).
\end{align*}  
Thus Theorem~\ref{main3} follows.

\bibliography{main}

\end{document}